\pgfplotsset{compat = newest}
\newcolumntype{C}{>{\centering\arraybackslash}X}
\newcolumntype{L}{>{\raggedright\arraybackslash}X}
\newtheorem{lemma}{Lemma}
\newcommand{\R}{\mathbb{R}}
\newcommand{\N}{\mathbb{N}}
\newcommand{\D}{\mathcal{D}}
\newcommand{\X}{\mathcal{X}}
\newcommand{\Verts}[1]{{\left\Vert #1 \right\Vert}}
\DeclareMathOperator{\diag}{diag}
\DeclareMathOperator{\var}{var}
\DeclareMathOperator{\mean}{\mu}
\DeclareMathOperator{\Var}{\Sigma}
\DeclareMathOperator{\Mean}{\bm\mu}
\DeclareMathOperator{\Prob}{P}
\newtheorem{assum}{Assumption}
\newtheorem{prop}{Proposition}
\newtheorem{cor}{Corollary}
\newtheorem{rem}{Remark}
\crefname{rem}{Remark}{Remarks}
\crefname{exam}{Example}{Examples}
\crefname{assum}{Assumption}{Assumptions}
\crefname{prop}{Proposition}{Propositions}
\crefname{propy}{Property}{Properties}
\crefname{cor}{Corollary}{Corollaries}
\crefname{lem}{Lemma}{Lemmas}
\crefname{section}{Section}{Sections}
\crefname{thm}{Theorem}{Theorems}
\crefname{defn}{Definition}{Definitions}
\crefname{figure}{Fig.}{Fig.}
\Crefname{figure}{Figure}{Figures}
\crefname{equation}{}{}
\newcommand{\x}{\bm x}
\newcommand{\f}{\bm{f}}
\newcommand{\y}{\bm{y}}
\pgfplotsset{compat = newest}
\begin{document}
%\newgeometry{left=48pt,bottom=43pt,right=48pt,top=60pt}

%\title{Safe Learning-Based Fault-Tolerant Control for an Hexarotor with Uncertain Aerodynamics Modeling}
\title{Learning-Based Fault-Tolerant Control for an Hexarotor with Model Uncertainty}

\author{Leonardo J. Colombo$^{1}$,
    Manuela Gamonal Fern\'andez$^{2}$,
   and Juan I. Giribet$^{3}$.

%\thanks{Manuscript received: February, 24, 2020; Revised April, 20, 2021; Accepted June, 11, 2021.}%
%\thanks{This paper was recommended for publication by
%Editor M. Ani Hsieh upon evaluation of the Associate Editor and Reviewers’
%comments.}
\thanks{J. Giribet was partially supported by NVIDIA Applied Research Program Award 2021, PICT-2019-2371 and PICT-2019-0373 projects from Agencia Nacional de Investigaciones Cient\'ificas y Tecnol\'ogicas, and UBACyT-0421BA project from the Universidad de Buenos Aires (UBA), Argentina. M. Gamonal was partially supported under ``Severo Ochoa Program for Centres of Excellence'' (CEX2019-000904-S). L. Colombo and M. Gamonal acknowledge financial support from the Spanish Ministry of Science
and Innovation, under grants PID2019-106715GB-C21. This work was supported by a 2020 Leonardo Grant for Researchers and Cultural Creators, BBVA Foundation. The BBVA Foundation accepts no responsibility
for the opinions, statements and contents included in the project and/or the results thereof, which are entirely the responsibility of the authors.}
\thanks{$^{1}$Leonardo J. Colombo is with the Centre for Automation and Robotics
(CSIC-UPM), Ctra. M300 Campo Real, Km 0,200, Arganda
del Rey - 28500 Madrid, Spain.
        {\tt\footnotesize leonardo.colombo@car.upm-csic.es}}
        \thanks{$^{2}$Manuela Gamonal is with the Institute of Mathematical Sciences, ICMAT (CSIC-UAM-UCM-UC3M), Spain. 
        {\tt\footnotesize manuela.gamonal@icmat.es}}
        \thanks{$^{3}$Juan I. Giribet is with Universidad de San Andr\'es (UdeSA) and CONICET, Argentina.
        {\tt\footnotesize jgiribet@conicet.gov.ar}}
%
%\thanks{Digital Object Identifier (DOI): see top of this page.}
}

\markboth{IEEE Robotics and Automation Letters. Preprint Version.} %Accepted June, 2021}
{Colombo \MakeLowercase{\textit{et al.}}: Learning-based fault-tolerant control}
\maketitle

\begin{abstract}
%For some practical applications, it is crucial to have certain guarantees of the performance of a system, for instance,  fault-tolerant control systems for unmanned aerial vehicles (UAVs).
%Having a correct model of the system is essential when it comes to designing a controller with certain guarantees. But these models will always be subject to uncertainties. Feedback control systems allow dealing with these uncertainties, however, there is always a compromise between performance and robustness, so having strategies that reduce the uncertainty of the model can be very useful in applications. 

In this paper we present a learning-based tracking controller based on Gaussian processes (GP) for a fault-tolerant hexarotor in a recovery maneuver. In particular, to estimate certain uncertainties that appear in a hexacopter vehicle with the ability to reconfigure its rotors to compensate for failures. The rotors reconfiguration introduces disturbances that make the dynamic model of the vehicle differ from the nominal model. The control algorithm is designed to learn and compensate the amount of modeling uncertainties after a failure in the control allocation reconfiguration by using GP as a learning-based model for the predictions. In particular the presented approach guarantees a probabilistic bounded tracking error with high probability.  The performance of the learning-based fault-tolerant controller is evaluated through experimental tests with an hexarotor UAV.
\end{abstract}

\begin{IEEEkeywords}
Multirotor vehicles, Fault-tolerant control, Bayesian learning, Data-driven control.  
\end{IEEEkeywords}

%%%%%%%%%%%%%%%%%%%%%%%%%%%%%%%%%%%%%%%%%%%%%%%%%%%%%%%%%%%%%%%%%%%%%%%%%%%%%%%%
\section{Introduction}\label{sec:intro}

Recent decades have seen an exponential growth in the development and use of autonomous aerial vehicles, which are capable of transporting passengers and cargo, performing remote monitoring or dangerous tasks with a significant reduction in operating costs due to the reduced need for human operators, both on-board and on the ground. These vehicles have applications on diverse domains as passenger transport and logistics, infrastructure monitoring, agriculture, early response to natural disasters and emergencies, communications relay, and provision of internet services, among many others. Globally, autonomous aerial vehicles represent a market of approximately USD 4 billion and it is projected that by 2030 will reach USD 24 billion \cite{AAM1}.

\begin{figure}[t]
    \centering
    \includegraphics[width = 0.8\columnwidth]{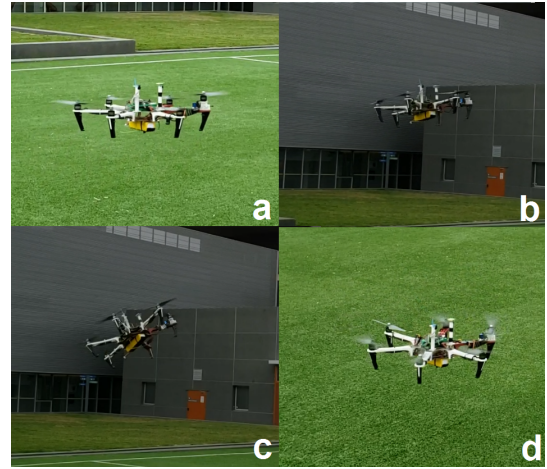}
    \caption{Behavior of the fault-tolerant hexarotor during a failure, detection and recovery maneuver. (a) Vehicle in nominal conditions. (b) One motor stops. (c) The failure is detected and the control allocation is reconfigured. (d) The failure is compensated and the vehicle is stabilized.}
    \label{fig:vuelo_outdoor}
\end{figure}

In the last decade, due to the progress of smaller, more powerful, and cheaper sensors and computers, and the convergence of distributed electric propulsion and storage technologies, in particular, multirotor type vehicles have become one of the better alternatives due to their maneuverability, being able to operate in small spaces and without requiring large dedicated infrastructures, thus enabling, among other things, operation in urban environments. The market potential for multirotor has driven the creation of a phenomenal number of technology-based companies dedicated to the development of these technologies, reaching a total of 10 billion in investment in the last 5 years, of which almost 5 billion correspond to the last year only \cite{AAMFund1}.

The autonomous operation of this type of aircraft requires addressing a set of problems. This is not only the case during a nominal flight, where all systems on-board and on the ground are operating within nominal parameters and the challenge is to flight in a confined space without collision, but it is significantly aggravated in emergency cases, where the vehicle must execute a forced emergency landing in conditions of reduced maneuverability and uncontrolled spaces with the possibility of damage to third parties. This remains to this day a critical point in the certification of UAVs \cite{NASAAAM}.

% \begin{figure}[t]
%     \centering
%     \includegraphics[width = \columnwidth]{img/hexa-tilt.jpg}
%     \caption{Hexarotor fault-tolerant vehicle. On the bottom-left arm, a servo allows to tilt the re-configurable motor.}
%     \label{fig:hexarotor_platform}
% \end{figure}

In \cite{giribet2016analysis}, the capability of compensating for a rotor failure without losing the ability to exert torques in all directions, and therefore keeping full attitude control in case of failure, is studied. For this, at least six rotors are needed, and was shown that an hexacopter with tilted rotors is fault tolerant.
While the system proved to work correctly, there was a direction in which, the achievable torque is noticeably reduced.
In \cite{Michieletto2017}, a detailed analysis is made for the optimal orientations of the rotors in a hexacopter, to achieve full tolerant attitude control. Still, the maximum torque achievable in some directions may be too small, with the consequent degradation of vehicle maneuverability. 
To overcome this limitation, in \cite{TMECH2020} several hexacopter structures and their fault-tolerant capabilities are analyzed. It is shown that a possible solution is to convert these structures into re-configurable ones, to significantly improve the maneuverability in case of a total failure in one rotor. 
By adding a mechanism to tilt one rotor sideway in case of a failure, a standard hexarotor vehicle can be converted into a robust fault-tolerant one. Furthermore, it was shown that it is enough to install this tilting mechanism in only two rotors, reducing the necessity of mechanical parts, improving the reliability of the vehicle. 
In previous mentioned works, the fault tolerant problem was studied as a control allocation problem, i.e., given a desired torque $\bm{\tau}_{cmd}\in\mathbb{R}^3$ and vertical force $f_z\in\mathbb{R}$ computed by the control algorithm, the problem is to find the PWM signals $u=[u_1,...,u_6]$ ($0\leq u_i\leq 100\%$) commanded to each rotor, in order to achieve the control torque. For this, the control allocation matrix $A\in\mathbb{R}^{4\times 6}$ is computed, which depends on the position and orientation of each rotor, usually with the Moore-Penrose pseudoinverse $u=A^\dagger \begin{bmatrix}\bm{\tau}_{cmd} \\ f_z\end{bmatrix}$.

When a failure occurs, and the rotors are configured in order to compensate the failure, a new allocation matrix $A_i\in\mathbb{R}^{4\times 6}$ is computed (which depends on the rotor that failed). The control algorithm remains the same in nominal and under failure conditions, it only changes how the force is distributed among the remaining rotors, in order to achieve the desired control torque $\bm{\tau}_{cmd}$ and force $f_z$. 

At first glance, since the control algorithm is not modified, the vehicle performance should be the same. However, this is not the case in practice because several factors affect the control performance after a failure. For instance, in nominal conditions the vehicle is designed considering that the PWM signals are between 40\% and 60\% in normal flight conditions. Then the hypothesis that there is a lineal relation between PWM and commanded torque and forces is valid. When a failure occurs the PWM must be redistributed and some rotors work between 40\% and 60\% meanwhile others works at 60\% and 80\%, approximately. Then the non-linearity relation between the PWM and commanded force and torque is more noticeable. This could be compensated with a correct characterization of each rotor, but this is not an easy task. Usually this effect is considered as a disturbance for the control algorithm and mitigated when the control loop is closed, scarifying performance because there is always a trade-off between performance and robustness. 
But even with a correct calibration of the rotors, this is not the only effect that affects the performance. In nominal conditions in hovering, the roll and pitch angle moves around zero. But when a failure occurs, in order to maintain hovering, these angles are moved away from zero because the rotor forces are not balanced. This also affects the aerodynamics forces acting on the vehicle, which are not easily characterized. Furthermore, recently has been shown in \cite{Abbaraju2021} that the aerodynamic effects caused due to tilt angled propeller configurations has impact on the vehicle performance, and in particular characterize experimentally some effects that were not particularly considered before as the  blade flapping effect for cant angled propellers. The authors proposed an aerodynamic model for these effects, and could be compensated but wind tunnel experiments must be carried out, which is not an easy task. Furthermore, it is not easy to isolate these effects and some others which we are not aware. 
All these model disturbances can be noticed experimentally. Even when the control algorithm remains the same after a failure, the vehicle performance is affected when switching from nominal to failure conditions. 

In this situation learning-based control strategies seem to be a valid strategy in order to compensate this model errors. In this work we explore a learning-based control strategy, where the learning is based on Gaussian processes (GP) models~\cite{rasmussen2006gaussian}. 

Recently, GP models has been increasingly used for modeling dynamical system due to many beneficial properties such as the bias-variance trade-off and the strong connection to Bayesian statistics \cite{beckers2021introduction}. In contrast to many other techniques, GP models provide not only a prediction but also a measure for the uncertainty of the model. This powerful property makes them very attractive for many applications in control, e.g., model predictive control~\cite{hewing2019cautious}, sliding mode control~\cite{lima2020sliding}, tracking of mechanical systems~\cite{beckers2019stable}, and backstepping control for underactuated vehicles~\cite{beckers2021online}, as the uncertainty measure allows to provide performance and safety guarantees. 
The purpose of this article is to employ the power of learning-based approaches based on GP models to learn the uncertainties in the model after a failure is detected guaranteeing the probabilistic boundedness of the tracking error to the reconfigured attitude and positions with high probability, together with an experimental tests for the validation of learning-based the controller in real practice. This allows to improve the model and, thus, mitigate the model uncertainties during runtime.

The remainder of this paper is structured as follows: after the problem setting in~\cref{sec:ps}, the learning-based modeling with GP and the tracking controller are introduced in~\cref{sec:lbc}. In particular, we provide a probabilistic model error bound for the unknown dynamics and apply it in to design a data-driven tracking control law for position and attitude of the UAV, based on a feedback control, and provide safety guarantees by means of a probabilistic ultimate bound of the tracking error. Finally, the performance  of the learning-based fault-tolerant controller is evaluated through experimental tests with an hexarotor UAV in~\cref{sec:exp}.

\section{Problem Setting}\label{sec:ps}
We assume a single rigid body on $SE(3)$\footnote{Vectors are denoted with bold characters and matrices with capital letters. The term~$A_{i,:}$ denotes the $i$-th row of the matrix~$A$. The expression~$\mathcal{N}(\mu,\Sigma)$ describes a normal distribution with mean~$\mu$ and covariance~$\Sigma$. The probability function is denoted by $\Prob$. The set $\R_{>0}$ denotes the set of positive real numbers. $||\cdot||$ denotes the Euclidean norm. $SO(3)$ and $SE(3)$ denote the special orthogonal and special euclidean Lie groups, respectively.} with position $\bm{p}\in\R^3$ and orientation matrix $R\in SO(3)$. The body-fixed angular velocity is denoted by $\bm{\omega}\in\R^3$ and the linear velocity by $\bm{v}\in\mathbb{R}^{3}$. The vehicle has mass matrix $m>0$ and rotational inertia tensor $J\in\R^{3\times 3}$, symmetric and positive definite. The state space of the vehicle is $Q=SE(3)\times\R^6$ with $\bm{q}=((R,\bm{p}),(\bm{\omega},\bm{v}))\in Q$ denoting the whole state of the system. The vehicle is actuated with control input vectors $\bm{u}_1\in\R^3$ and $\bm{u}_2\in\R^3$, representing the $6$D generalized actuation force acting on the system. % which is applied in a body-fixed direction defined by a unit vector $\bm{e}\in\R^3$.

We can model the system with the following set of differential equations representing the kinematics of the rigid body and its uncertain (i.e., partially unknown) dynamics
\begin{align}
	\begin{split}
	\dot{R}&=RS(\bm{\omega}),\,\,\dot{p}=R\bm{v},\\
	J\dot{\bm{\omega}}&=-\bm{\omega}\times J \bm{\omega}+\bm{u}_1+\f_{\omega}(\bm{q})\big.,\\
	m \dot{\bm{v}}&=-\bm{\omega}\times m\bm{v}+mgRe_3+\bm{u}_2+\f
	_{v}(\bm{q}),
	\end{split}\label{for:system}
\end{align}
where the operator $S\colon\R^3\to \mathfrak{so}(3)$ is given by
$\displaystyle{
	S(\bm{\omega})=\begin{bmatrix}
	0 & -\omega_3 & \omega_2\\
	\omega_3 & 0 & -\omega_1\\
	-\omega_2 & \omega_1 & 0
	\end{bmatrix}}$, and $\mathfrak{so}(3)$ is the Lie algebra of $SO(3)$ determined by the set of $(3\times 3)$ skew-symmetric matrices. The functions $\f_v\colon Q\to\R^3$ and $\f_\omega\colon Q\to\R^3$ are state-depended unknown dynamics. It is assumed that the full state $\bm{q}$ can be measured. The general objective is to track a desired trajectory described by the functions $(R_d,\bm{p}_d)\colon [0,T]\to SE(3)$. Even though in this formulation we consider a fully actuated system, for some under-actuated systems a virtual control input can be defined to transform the system in a suitable form (see \cite{isidori1985nonlinear}).

%\subsection{Equivalent system}
In preparation for the learning and control step, we transform the system dynamics~\eqref{for:system} in an equivalent form. 
%For the unknown dynamics $\f_v$ and $\f_\omega$, we use the estimates $\hat{\f}_v\colon S\to\R^3$ and $\hat{\f}_\omega\colon S\to\R^3$, respectively. The estimation errors are $\bm{\rho}_v(\bm{s})=\f_v(\bm{s})-\hat{\f}_v(\bm{s})$ and $\bm{\rho}_\omega(\bm{s})=\f_\omega(\bm{s})-\hat{\f}_\omega(\bm{s})$.

Let us define the vector $\bm{\xi}=[\bm{\omega}^{T},\bm{v}^{T}]^{T}\in\mathbb{R}^6$. Then, the vector space isomorphism $\breve{S}:\mathbb{R}^6\to\mathfrak{se}(3)$ is given by 
$$\breve{S}(\bm{\xi})=\breve{S}(\bm{\omega},\bm{v})=\begin{bmatrix}
	S(\bm{\omega}) &  \bm{v}\\
	0_{1\times3} & 0 
	\end{bmatrix}\in\mathfrak{se}(3)$$ where $\mathfrak{se}(3)$ is the Lie algebra of $SE(3)$ determined by the matrices
$\displaystyle{\left(
     \begin{array}{cc}
      A & b\\
       0 & 0 \\
     \end{array}
   \right)}$, with $A\in \mathfrak{so}(3)$ and $a\in \mathbb{R}^3$. By defining 
\begin{align}
	D(\bm{\omega},\bm{v})=\begin{bmatrix}
	-S(\bm{\omega}) & -S(\bm{v})\\ 0_{3\times 3} & -S(\bm{\omega})
	\end{bmatrix},\, \bm{G}=\begin{bmatrix}
	R & \bm{p}\\ 0_{3\times 1} & 1
	\end{bmatrix}\in SE(3),
\end{align}$\mathbb{I}=\hbox{diag}\{J,M\}$, we can write~\eqref{for:system} in matrix form as 
\begin{equation}\label{for:system1}\begin{cases}
	\dot{\bm{G}}&=\bm{G}\breve{S}(\bm{\xi}) \\
				\mathbb{I}\dot{\bm{\xi}}&=D(\bm{\omega},\bm{v})\mathbb{I}\bm{\xi}+\bm{u}+\bm{f}(\bm{s})
	\end{cases}\end{equation} ${\bm{f}}=({\bm{f}}_v,{\bm{f}}_w):Q\to\mathbb{R}^6$, $\bm{q}\in Q$. So,~\eqref{for:system1} is equivalent to~\eqref{for:system}.
	
	Suppose that $\hat{\bm{f}}$ is known model of the disturbances, then we can rewrite the equations and estimate $\hat{\bm{f}}-\bm{f}$ instead of $\bm{f}$. For instance, if we want to include the aerodynamic model of the disturbances after tilting a rotor, and estimate the difference with respect to this aerodynamic model, we can do it in this way. Here, we assume that we don't have such model and the problem is to estimate $\bm{f}$. 
% ------------------------------------------------------------------------------
%% THEOREM
% ------------------------------------------------------------------------------
\section{Learning-based control with Gaussian processes}\label{sec:lbc}
%An overview about the proposed control strategy is depicted in~\ref{fig:block}.
As introduced in the vehicle's dynamics~\eqref{for:system1}, we assume that parts of the dynamics  are unknown, i.e., $\bm{f}_v$ and $\f_\omega$. The proposed control strategy is based on design a controller by using a model that is updated by the predictions of a Gaussian process. 
%The data for the GP is collected in arbitrary time intervals of the vehicle's dynamics during the control process. Then, the predictions of the GP are updated based on the collected dataset and the vehicle model is improved. 
In the following, we present the learning and control framework in detail.

\subsection{Learning with Gaussian processes}
For the compensation of the unknown dynamics of \eqref{for:system}, we use 
Gaussian Processes (GPs) to estimate the values of $\f_v,\f_\omega$ for a given state $\bm{q}$. For this purpose, $N(n):\N\to\N$ training points of the system \eqref{for:system} are collected to create a data set
\begin{align}
\D_{n(t)}=\{\bm{q}^{\{i\}},\y^{\{i\}}\}_{i=1}^{N(n)}.\label{for:dataset}
\end{align}
 The output data $\y\in\R^6$ is given by $\y=[(m\dot{\bm{v}}+\bm{\omega}\times m\bm{v}-mgRe_3-\bm{u}_2)^{\top} ,(J\dot{\bm{\omega}}+\bm{\omega}\times J\bm{\omega}-\bm{u}_1)^\top]^\top$ such that the first three components of $\y$ correspond to $\f_v$ and the remaining to $\f_\omega$.  The data set $\D_{n(t)}$ with $n\colon \R_{\geq 0}\to\N$ can change over time $t$, such that at time $t_1\in\R_{\geq 0}$ the data set $\D_{n(t_1)}$ with $N(n(t_1))$ training points exists. This allows to accumulate training data over time, i.e., the number of training points $N(n)$ in the data set $\D_n$ is monotonically increasing, but also ``forgetting" of training data to keep $N(n)$ constant. The time-dependent estimates of the GP are denoted by $\hat{\f}_{v,n}(\bm{q})$ and $\hat{\f}_{\omega,n}(\bm{q})$ to highlight the dependence on the corresponding data set $\D_{n}$. Note that this construction also allows offline learning, i.e. the estimation depends on previous collected data only, or any hybrid online/offline approach.

\begin{assum}\label{ass:2}
The number of data sets $\D_n$ is finite and there are only finitely many switches of~$n(t)$ over time, such that there exists a time $T\in\R_{\geq 0}$ where $n(t)=n_{\text{end}},\forall t\geq T$.
\end{assum}

\begin{rem}Note that Assumption \ref{ass:2} is little restrictive since the number of sets is often naturally bounded due to finite computational power or memory limitations and since the unknown functions $\f_v,\f_\omega$ in \eqref{for:system} are not time-dependent, long-life learning is typically not required. Therefore, there exists a constant data set $\D_{n_{end}}$ for all $t>T_{end}$. Furthermore, Assumption \ref{ass:2} ensures that the switching between the data sets is not infinitely fast which is natural in real world applications.\end{rem}

Gaussian process models have been proven as very powerful oracle for nonlinear function regression. For the prediction, we concatenate the~$N(n)$ training points of $\D_n$ in an input matrix~$X=[\bm{q}^1,\bm{q}^2,\ldots,\bm{q}^{N(n)}]$ and a matrix of outputs~$Y^\top=[\y^1,\y^2,\ldots,\y^{N(n)}]$, where $\y$ might be corrupted by additive Gaussian noise with $\mathcal{N}(0,\sigma I_6)$. Then, a prediction for the output $\y^*\in\R^6$ at a new test point $\bm{q}^*\in Q$ is given by
\begin{align}
	\mean_i(\y^*\vert \bm{q}^*,\D_n)&=m_i(\bm{q}^*)+\bm{k}(\bm{q}^*,X)^\top K^{-1}\label{for:gppred}\\
	&\phantom{=}\left(Y_{:,i}-[m_i(X_{:,1}),\ldots,m_i(X_{:,N})]^\top\right)\notag\\
	\var_i(\y^*\vert \bm{q}^*,\D_n)&=k(\bm{q}^*,\bm{q}^*)-\bm{k}(\bm{q}^*,X)^\top K^{-1} \bm{k}(\bm{q}^*,X).\notag
\end{align}
for all $i\in\{1,\ldots,6\}$, where $Y_{:,i}$ denotes the $i$-th column of the matrix of outputs~$Y$. The kernel $k\colon Q\times Q\to\R$ is a measure for the correlation of two states~$(\bm{q},\bm{q}^\prime)$, whereas the mean function $m_i\colon Q\to\R$ allows to include prior knowledge. The function~$K\colon Q^N\times  Q^N\to\R^{N\times N}$ is called the Gram matrix whose elements are given by $K_{j',j}= k(X_{:, j'},X_{:, j})+\delta(j,j')\sigma^2$ for all $j',j\in\{1,\ldots,N\}$ with the delta function $\delta(j,j')=1$ for $j=j'$ and zero, otherwise. The vector-valued function~$\bm{k}\colon Q\times  Q^N\to\R^N$, with the elements~$k_j = k(\bm{q}^*,X_{:, j})$ for all $j\in\{1,\ldots,N\}$, expresses the covariance between $\bm{q}^*$ and the input training data $X$. The selection of the kernel and the determination of the corresponding hyperparameters can be seen as degrees of freedom of the regression. The hyperparameters and the variance  $\sigma$ of the Gaussian noise in the training data can be estimated by optimizing the marginal log likelihood, see~\cite{rasmussen2006gaussian}. A powerful kernel for GP models of physical systems is the squared exponential kernel. An overview about the properties of different kernels can be found in~\cite{rasmussen2006gaussian}. In addition, the mean function can be achieved by common system identification techniques of the unknown dynamics~$\f_v,\f_\omega$ as described in~\cite{aastrom1971system}. However, without any prior knowledge the mean function is set to zero, i.e. $m_i(\bm{q})=0$.

Based on \eqref{for:gppred}, the normal distributed components~$y^*_i\vert \bm{q}^*,\D_n$ are combined into a multi-variable distribution $\y^*\vert(\bm{q}^*,\D_n) \sim \mathcal{N} (\Mean(\cdot),\Var(\cdot))$, where $ \Mean(\y^*\vert \bm{q}^*,\D_n)=[\mean_1(\cdot),\ldots,\mean_{6}(\cdot)]^\top$ and  $\Var(\y^*\vert \bm{q}^*,\D_n)=\diag\left[\var_1(\cdot),\ldots,\var_{6}(\cdot)\right]$. For simplicity, we consider identical kernels for each output dimension. However, the GP model can be easily adapted to different kernels for each output dimension.

%As we use the GP in an online setting where new data is collected over time, the dataset $\D_n$ for the prediction \eqref{for:gppred} changes over time. The GP model allows to integrate new training data in a simple way by exploiting that every subset follows a multivariate Gaussian distribution.

For the later stability analysis of the closed-loop system, we introduce the following assumptions

\begin{assum}\label{ass:1}
	Consider a Gaussian process with the predictions $\hat{\f}_{v,n} \hbox{ and }\hat{\f}_{\omega,n}\in\mathcal{C}^0$ based on the data set $\D_n$~\cref{for:dataset}. Let $Q_\X\subset (SE(3)\times (\X\subset\R^6))$ be a compact set where $\hat{\f}_{v,n}, \hat{\f}_{\omega,n}$ are bounded on $\X$. There exists a bounded function $\bar{\rho}_n\colon Q_\X\to\R_{\geq 0}$ such that,  the prediction error is bounded by
	\begin{align}
		\Prob\Bigg\lbrace\left\Vert\begin{bmatrix} \f_v(\bm{s})- \hat{\f}_{v,n}(\bm{q}) \\ \f_\omega(\bm{q})- \hat{\f}_{\omega,n}(\bm{q})\end{bmatrix}\right\Vert\leq \bar{\rho}_n(\bm{q})\Bigg\rbrace\geq \delta
	\end{align}
	with probability $\delta\in (0,1]$, $\bm{q}\in Q_\X$.
\end{assum}
\begin{rem} Assumption \ref{ass:1} ensures that on each data set $\mathcal{D}_n$, there exists a probabilistic upper bound for the error between the prediction $\hat{\f}_{v,n}(\bm{q}),\hat{\f}_{\omega,n}(\bm{q})$ and the actual $\f_v(\bm{q}),\f_\omega(\bm{q})$ on a compact set.\end{rem}

To provide model error bounds, additional assumptions on the unknown parts of the dynamics \eqref{for:system} must be introduced \cite{wolpert1996lack}. 
\begin{assum}\label{as:rkhs}
	The kernel~$k$ is selected such that~$\f_v,\f_\omega$ have a bounded reproducing kernel Hilbert space (RKHS) norm on $Q_\X$, i.e.,~$\Verts{f_{v,i}}_k<\infty\hbox{ and }\Verts{f_{\omega,i}}_k<\infty$ for all~$i=1,2,3$.
\end{assum}
The norm of a function in a RKHS is a smoothness measure relative to a kernel~$k$ that is uniquely connected with this RKHS. In particular, it is a Lipschitz constant with respect to the metric of the used kernel. A more detailed discussion about RKHS norms is given in~\cite{wahba1990spline}.~\Cref{as:rkhs} requires that the kernel must be selected in such a way that the functions~$\f_v,\f_\omega$ are elements of the associated RKHS. This sounds paradoxical since this function is unknown. However, there exist some kernels, namely universal kernels, which can approximate any continuous function arbitrarily precisely on a compact set~\cite[Lemma 4.55]{steinwart2008support} such that the bounded RKHS norm is a mild assumption. Finally, with~\cref{as:rkhs}, the model error can be bounded as written in the following lemma.

\begin{lemma}[adapted from~\cite{srinivas2012information}]
	\label{lemma:boundederror}
Consider the unknown functions $\f_v,\f_\omega$ and a GP model satisfying~\cref{as:rkhs}. The model error is bounded  by
	\begin{align}
		\Prob\Bigg\lbrace\Bigg\Vert & \Mean\Bigg(\begin{bmatrix}\hat{\f}_{v,n}(\bm{q})\\ \hat{\f}_{\omega,n}(\bm{q})\end{bmatrix}\Bigg\vert \bm{q},\D_n\Bigg)-\begin{bmatrix}\f_v(\bm{q})\\\f_\omega(\bm{q})\end{bmatrix}\Bigg\Vert\notag\\
		&\hspace{1.5cm}\leq \Bigg\Vert\bm{\beta}_n^\top \Var^{\frac{1}{2}}\Bigg(\begin{bmatrix}\hat{\f}_n(\bm{q})\\ \hat{\f}_{\omega,n}(\bm{q})\end{bmatrix}\Bigg\vert \bm{q},\D_n\Bigg)\Bigg\Vert\Bigg\rbrace\geq \delta\notag
	\end{align}
	for~$\bm{q}\in Q_\X,\delta\in(0,1)$ with $\bm{\beta}_n\in\R^6$, \begin{align}
    (\bm{\beta}_n)_{j} =\sqrt{2\left\|\rho_{j}\right\|_{k}^{2}+300 \gamma_{j} \ln ^{3}\left(\frac{N(n)+1}{1-\delta^{1/6}}\right)}. 
\end{align}
The variable~$\gamma_{j} \in \R$ is the maximum information gain
\begin{align}
    \gamma_{j} &=\max _{\bm{q}^{\{1\}}, \ldots, \bm{q}^{\{N(n)+1\}} \in Q_\X} \frac{1}{2} \log \left|I+\sigma_{j}^{-2} K\left(\x, \x^{\prime}\right)\right| \\
    \x, \x^{\prime} & \in\left\{\bm{q}^{\{1\}}, \ldots, \bm{q}^{\{N(n)+1\}}\right\}.
\end{align}
\end{lemma}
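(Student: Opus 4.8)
The plan is to reduce the six-dimensional estimate to six scalar Gaussian-process confidence bounds, one per output channel, and then recombine them; only the last step uses anything beyond the cited result.

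\emph{Scalar bound.} First I would recall the scalar information-theoretic confidence bound of~\cite{srinivas2012information}: for a single-output GP regressing a function $g$ with finite RKHS norm $\|g\|_k$ from $N(n)$ measurements corrupted by i.i.d.\ Gaussian noise of variance $\sigma^2$, and for any $\delta'\in(0,1)$, one has with probability at least $1-\delta'$, uniformly over all $\bm q\in Q_\X$,
\[
  \bigl|\mean(g^*\mid\bm q,\D_n)-g(\bm q)\bigr|\le\beta'\,\var(g^*\mid\bm q,\D_n)^{1/2},
  \qquad
  \beta'=\sqrt{2\|g\|_k^2+300\,\gamma\,\ln^3\!\Bigl(\tfrac{N(n)+1}{\delta'}\Bigr)},
\]
where $\gamma$ is the maximum information gain over $N(n)+1$ query points at noise level $\sigma$, i.e.\ exactly the quantity $\gamma_j$ of the statement.

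\emph{The six channels.} Next I would apply this bound separately to each scalar component of the unknown dynamics, namely $f_{v,i}$ and $f_{\omega,i}$ for $i=1,2,3$, written collectively as $\rho_1,\dots,\rho_6$; these functions have finite RKHS norm by~\cref{as:rkhs}, and the corresponding GPs use the common kernel fixed earlier together with the per-channel noise variance. Choosing the per-channel confidence level $\delta'=1-\delta^{1/6}$ makes $\beta'$ coincide with $(\bm\beta_n)_j$, so the event $E_j=\bigl\{\,|\mean_j(\cdot)-\rho_j(\bm q)|\le(\bm\beta_n)_j\,\var_j(\cdot)^{1/2}\ \text{for all }\bm q\in Q_\X\,\bigr\}$ holds with probability at least $\delta^{1/6}$. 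Since the measurement noise is independent across the six output channels (its covariance $\sigma I_6$ is diagonal, and each component GP conditions only on its own outputs), the events $E_1,\dots,E_6$ are mutually independent, whence $\Prob\{E_1\cap\cdots\cap E_6\}\ge(\delta^{1/6})^6=\delta$.

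\emph{Recombination.} Finally, on the event $E_1\cap\cdots\cap E_6$ I would square the six scalar inequalities and sum,
\[
  \Bigl\|\Mean(\cdot)-[\f_v;\f_\omega]\Bigr\|^2=\sum_{j=1}^{6}\bigl(\mean_j(\cdot)-\rho_j(\bm q)\bigr)^2\le\sum_{j=1}^{6}(\bm\beta_n)_j^2\,\var_j(\cdot)=\bigl\|\bm\beta_n^{\top}\Var^{1/2}(\cdot)\bigr\|^2,
\]
and take square roots, which is precisely the asserted inequality on a set of probability at least $\delta$. The one delicate point is this probability bookkeeping: a plain union bound over the six channels would force a strictly worse confidence constant, so the argument genuinely uses the cross-channel independence of the noise to turn six $\delta^{1/6}$-confidences into a single $\delta$; everything else is a direct application of~\cite{srinivas2012information} and~\cref{as:rkhs}. (A minor caveat: \cite{srinivas2012information} states the scalar bound for a.s.\ bounded noise, so for the Gaussian noise assumed here one invokes the standard sub-Gaussian refinement, which leaves the form of $\beta'$ unchanged.)
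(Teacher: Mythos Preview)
Your argument is correct and is essentially a fleshed-out version of the paper's own proof, which consists of the single sentence ``It is a direct implication of~[Theorem~6]{srinivas2012information}.'' Your per-channel application of the scalar bound together with the cross-channel independence of the noise to justify the $\delta^{1/6}$ confidence level is a sound elaboration of what the paper leaves entirely implicit.
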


\begin{proof}
It is a direct implication of~\cite[Theorem 6]{srinivas2012information}.
\end{proof}
%With~\cref{as:rkhs} and the fact, that universals kernels exist which generate bounded predictions with bounded derivatives, see~\cite{beckers:ecc2016}, GP models can be used as oracle to fulfill~\cref{ass:1}. 

Note that the prediction error bound in~\cref{ass:1} is given by 
$\bar{\rho}_n(\bm{q}):= ||\bm{\beta}_n^\top\Var^{\frac{1}{2}}([\hat{\f}_{v,n}(\bm{q})^\top,\hat{\f}_{\omega,n}(\bm{q})^\top]^\top\vert \bm{q},\D_n)||$ as shown by ~\cref{lemma:boundederror}.
\begin{rem}
An efficient algorithm can be used to find $\bm{\beta}_n$ based on the maximum information gain. Even though the values of the elements of~$\bm \beta$ are typically increasing with the number of training data, it is possible to learn the true function~$\f_v,\f_\omega$ arbitrarily exactly due to the shrinking variance $\Sigma$, see~\cite{Berkenkamp2016ROA}. In general, the prediction error bound $\bar{\rho}_n(\bm{q})$ is large if the uncertainty for the GP prediction is high and vice versa. Additionally, the bound is typically increasing if the set $Q_\X$ is expanded. The stochastic nature of the bound is due to the fact that just a finite number of noisy training points are available.
\end{rem}
\subsection{Control design}
Next we design the position and orientation controllers $\bm{u}_1$ and $\bm{u}_2$, respectively for the control system~\cref{for:system}. We prove the stability of the closed-loop with a proposed control law with multiple Lyapunov functions, where the $n$-th function is active when the GP predicts based on the corresponding training set $\D_n$. Note that due to a finite number of switching events, the switching between stable systems can not lead to an unbounded trajectory, ~\cite{liberzon1999basic}.

\textbf{Position controller:} Let $\bm{p}_d\in\mathbb{R}^3$ be the desired position. Define the position error by $\bm{e}=R^{T}(\bm{p}-\bm{p}_d)\in\mathbb{R}^{3}$. By differentiation the latter with respect to time, the error dynamics can be written as $\dot{\bm{e}}=-S(\bm{\omega})\bm{e}+\bm{v}$.

Let $\bm{z}\in\mathbb{R}^{3}$ be an error signal representing the difference between the desired and actual linear velocities, $\bm{z}=\bm{v}-\bm{v}_d\in\mathbb{R}^{3}$. Consider the Lyapunov function $V_{1,n}:\mathbb{R}^{3}\times\mathbb{R}^3\to\mathbb{R}_{\geq 0}$, $$V_{1,n}(\bm{e},\bm{z})=\frac{1}{2}||\bm{e}||^{2}+\frac{1}{2}m\bm{z}^{T}\bm{z}\geq 0$$ and note that $\Lambda_{1}||\bm{\zeta}||^2\leq V_{1,n}(\bm{e},\bm{z})\leq\Lambda_{2}||\bm{\zeta}||^2$, where $\bm{\zeta}=[\bm{e}^{T}, \bm{z}^{T}]^{T}$, $\Lambda_{1}=\frac{1}{2}\hbox{max}\{1,m\}$, $\Lambda_{2}=\frac{1}{2}\hbox{min}\{1,m\}$. 

By differentiating $V_{1,n}$ with respect to the time along the trajectories of the system, using the expresions for $\bm{e}, \dot{\bm{e}}$, $\bm{y}, \dot{\bm{y}}$, the fact that $S(\bm{\omega})$ is skew-symmetric, and \eqref{for:system1}, we obtain that \begin{align}\label{vdot}\dot{V}_{1,n}(\bm{e},\bm{z})=&\bm{e}^{\top}\bm{v}_d\\&+\bm{z}^{\top}\{\bm{e}-S(\bm{\omega})m\bm{v}+{\bm{f}}_{v}+\bm{u}_2-m\dot{\bm{v}}_d\}.\nonumber\end{align} We design the desired velocity as $\bm{v}_d=-k_1\bm{e}$, and the position controller as \begin{align}\label{positioncontrol}
    \bm{u}_2=&-k_2\bm{z}-\bm{e}+S(\bm{\omega})m\bm{v}\\&-k_1m(S(\bm{\omega})\bm{e}+\bm{v})-\mu({\bm{f}}_{v,n}\mid \bm{q},\mathcal{D}_n),\nonumber
\end{align}where $k_1,k_2\in\mathbb{R}_{>0}$ are controller gains to be tuned.

\begin{prop}
Consider the system ~\eqref{for:system1} and a GP model trained with ~\eqref{for:dataset} satisfying ~\cref{ass:2}, \ref{ass:1}, \ref{as:rkhs}. The position control law \eqref{positioncontrol} guarantees that the tracking error $\bm{\zeta}$ is uniformly ultimately bounded in probability by 
\begin{align}
\Prob\left\lbrace\Verts{\bm{\zeta}(t)}\leq \sqrt{\frac{\Lambda_{2}}{\Lambda_{1}}}\max_{\bm{q}\in Q_\X}\bar{\rho}_{n_\text{end}}(\bm{q}),\forall t\geq T\right\rbrace\geq\delta
\end{align}
on $Q_\X$ with $T\in\R_{\geq 0}$, and exponentially converges to zero.
\end{prop}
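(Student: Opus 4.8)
The plan is a backstepping/Lyapunov argument: close the loop with \eqref{positioncontrol}, treat the GP model error as an additive perturbation whose size is bounded probabilistically by \cref{lemma:boundederror}, and finish with a textbook uniform ultimate boundedness estimate, handling the data-set switching via \cref{ass:2}.

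\textbf{Step 1 (closed loop).} I would substitute $\bm v_d=-k_1\bm e$ and the controller \eqref{positioncontrol} into \eqref{vdot}. The term $\bm e^\top\bm v_d$ contributes $-k_1\|\bm e\|^2$, while inside the $\bm z^\top\{\cdot\}$ bracket the $\bm e$, the $S(\bm\omega)m\bm v$ and the feedforward $-m\dot{\bm v}_d$ contributions are cancelled by the matching terms of \eqref{positioncontrol} (using $\dot{\bm e}=-S(\bm\omega)\bm e+\bm v$ and skew-symmetry of $S(\bm\omega)$), so the bracket collapses to $-k_2\bm z+\bm f_v(\bm q)-\hat{\bm f}_{v,n}(\bm q)$ and
\[
\dot V_{1,n}(\bm e,\bm z)=-k_1\|\bm e\|^2-k_2\|\bm z\|^2+\bm z^\top\!\big(\bm f_v(\bm q)-\hat{\bm f}_{v,n}(\bm q)\big),
\]
a nominal exponentially stabilizing part plus the model-error term.

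\textbf{Step 2 (probabilistic ultimate bound).} By \cref{lemma:boundederror} (equivalently \cref{ass:1}) the event $\mathcal E_n:=\{\,\|\bm f_v(\bm q)-\hat{\bm f}_{v,n}(\bm q)\|\le\bar{\rho}_n(\bm q)\ \ \forall\,\bm q\in Q_\X\,\}$ has probability at least $\delta$. On $\mathcal E_n$, Cauchy--Schwarz and $\bar{\rho}^{\star}_n:=\max_{\bm q\in Q_\X}\bar{\rho}_n(\bm q)$ give
\[
\dot V_{1,n}\le-\min\{k_1,k_2\}\,\|\bm\zeta\|^2+\bar{\rho}^{\star}_n\,\|\bm\zeta\|,
\]
so, with the gains chosen so that $\min\{k_1,k_2\}\ge 1$, $\dot V_{1,n}\le 0$ whenever $\|\bm\zeta\|\ge\bar{\rho}^{\star}_n$, and in fact $\dot V_{1,n}\le-\alpha V_{1,n}$ for some $\alpha>0$ outside that ball (using $\Lambda_1\|\bm\zeta\|^2\le V_{1,n}\le\Lambda_2\|\bm\zeta\|^2$). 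The standard sub-level-set comparison (Khalil-type ultimate boundedness) then shows the trajectory enters and remains in $\{V_{1,n}\le\Lambda_2(\bar{\rho}^{\star}_n)^2\}\subseteq\{\|\bm\zeta\|\le\sqrt{\Lambda_2/\Lambda_1}\,\bar{\rho}^{\star}_n\}$, the asserted residual set, reaching it at the exponential rate $\alpha$; in particular, if the prediction were exact ($\bar{\rho}^{\star}_n\equiv 0$) the origin would be globally exponentially stable.

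\textbf{Step 3 (switching and conclusion).} Since $V_{1,n}$ is independent of $n$ it is a common Lyapunov function, so no dwell-time argument is needed: by \cref{ass:2} there are finitely many switches and $n(t)\equiv n_{\text{end}}$ for $t\ge T$; on $[0,T)$ the estimate of Step 2 holds on each interval with the corresponding $\bar{\rho}_n$ (finitely many, hence uniformly bounded), which rules out finite escape and keeps $\bm\zeta(T)$ finite. Applying Step 2 on $[T,\infty)$ with $n=n_{\text{end}}$ and conditioning on $\mathcal E_{n_{\text{end}}}$ --- an event of probability at least $\delta$ --- the trajectory reaches and stays in $\{\|\bm\zeta\|\le\sqrt{\Lambda_2/\Lambda_1}\,\max_{\bm q\in Q_\X}\bar{\rho}_{n_{\text{end}}}(\bm q)\}$ for all $t\ge T$, which is exactly the claim.

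\textbf{Main obstacle.} The delicate point is that \cref{lemma:boundederror} holds only on the compact set $Q_\X$, so the whole chain presupposes $\bm q(t)\in Q_\X$ for all $t$. I would discharge this by exhibiting a sub-level set $\{V_{1,n}\le c\}$ whose $\bm\zeta$-projection lies inside $Q_\X$ and that is forward invariant under the closed loop --- immediate from $\dot V_{1,n}\le 0$ on its boundary once $c$ exceeds the residual level --- and restricting the proposition to initial conditions therein; a secondary, purely bookkeeping, point is that $\bar{\rho}_n$ depends on the full state $\bm q$ and not on $\bm\zeta$ alone, which is why the maximum over $Q_\X$ enters the bound.
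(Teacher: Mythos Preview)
Your proposal is correct and follows essentially the same Lyapunov/backstepping argument as the paper: substitute the control law into \eqref{vdot}, bound the residual GP error via \cref{lemma:boundederror}, invoke a standard ultimate-boundedness estimate with the quadratic sandwich on $V_{1,n}$, and dispose of the data-set switching via \cref{ass:2}. You are in fact a bit more careful than the paper in two places---you make explicit the implicit gain normalization $\min\{k_1,k_2\}\geq 1$ needed for the stated bound, and you flag (and resolve) the forward-invariance of $Q_\X$, which the paper silently assumes.
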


\begin{proof}
By substituting $\bm{v}_d=-k_1\bm{e}$ and $\bm{u}_2$ given by  \eqref{positioncontrol} into \eqref{vdot} we get $$\dot{V}_{1,n}(\bm{e},\bm{z})\leq -\min\{k_1,k_2\}||\bm{\zeta}||^2+\bm{z}^{T}({\bm{f}}_{v,n}-\mu( {\bm{f}}_{v} \mid \bm{q},\mathcal{D}_n)).$$ So, by ~\cref{lemma:boundederror}, the evolution of the Lyapunov function $V_{1,n}$ can be upper bounded by 
\begin{align}
\Prob\{\dot{V}_{1,n}&\leq -\min\{k_1,k_2\}||\bm{\zeta}||^2
+||\bm{z}||\bar{\rho}_n(\bm{q})\}\geq\delta\notag.\label{for:UPV3}
\end{align}
 Thus, the evolution is negative with probability $\delta$ for all $\bm{\zeta}$ such that $\displaystyle{\Verts{\bm{\zeta}}>\max_{\bm{q}\in Q_\X}\bar{\rho}_n(\bm{q})\frac{1}{\min\{k_ 1,k_2\}}}$, where a maximum of $\bar{\rho}_n$ exists regarding to~\cref{ass:1}. Finally, the Lyapunov function $V_{1,n}$ is lower and upper bounded by $\alpha_1(\Verts{\bm{\zeta}})\leq V_{1,n}(\bm{\zeta})\leq\alpha_2(\Verts{\bm{\zeta}})$, where $\alpha_1(r)=\Lambda_{1}r^2$ and $\alpha_2(r)=\Lambda_{2}r^2$. Thus, we can compute the maximum tracking error $b_n\in\R_{\geq_0}$ such that $\Prob\{||\bm{\zeta}||\leq b_n\}\geq\delta$ by 
$
b_n=\sqrt{\frac{\Lambda_{2}}{\Lambda_{1}}}\max_{\bm{q}\in Q_\X}\bar{\rho}_n(\bm{q})$.%\label{for:bound}

Since~\cref{ass:2} only allows a finite number of switches, there exists a time $T_{\text{end}}\in\R_{\geq 0}$ such that $n(t)=n_{\text{end}}$ $\forall t\geq T_{\text{end}}$ and so $\mathcal{D}_{n(T_{\text{end}})}=\mathcal{D}_{n_{\text{end}}}$. Thus, there exists $T\geq T_{\text{end}}$ such that  $\Prob\{\Verts{\bm{\zeta}(t)}\leq\sqrt{\frac{\Lambda_{2}}{\Lambda_{1}}} \max_{\bm{q}\in Q_\X}\bar{\rho}_{n_\text{end}}(\bm{q}),\forall t\geq T\}\geq\delta$.\end{proof}

\textbf{Attitude controller:} Let $R_d\in SO(3)$ be the desired attitude of the vehicle. Define the real-valued error $\Psi:SO(3)\times SO(3)\to [0,2]$ by $\Psi(R,R_d)=\frac{1}{2}\hbox{Tr}[I_{3\times 3}-R_d^{T}R]$. This function is locally positive definite about $R=R_d$ within the region where the rotation angles between $R$ and $R_d$ is less than $\pi$ rads. This can be represented by the set \begin{equation}\label{invariantset}
\mathcal{L}=\{R, R_{d}\in SO(3):\Psi(R,R_d)<2\}
\end{equation} which almost covers $SO(3)$. Note that for any rotation matrix $Q=R_d^\top R\in SO(3)$ it holds that $-1\leq\hbox{Tr}(Q)\leq 3$ and $\hbox{Tr}(Q)=3$ if and only if $Q=I_{3\times 3}$. When $\hbox{Tr}(Q)=-1$ or when equivalently $R=R_d\hbox{exp}[\pm\pi\hat{q}]$ for every $\hat{q}\in S^2$ we have the case that $\Psi(R,R_d)<2$. Thus $\Psi(R,R_d)\in[0,2]$ and $\Psi(R,R_d)=0$ only when $R=R_d$.

Let us define the vector $\bm{\chi}:SO(3)\times SO(3)\to\mathbb{R}^3$ by $$\bm{\chi}(R,R_d)=\frac{1}{2}S^{-1}(R_d^\top R-R^\top R_d).$$ By using Rodriguez formula it can be shown (see \cite{lee2012exponential}) that there are positive constants $0<c_1<c_2$ such that $c_1||\bm{\chi}||^2\leq\Psi(R,R_d)\leq c_2||\bm{\chi}||^2$. 

Next, we compute the critical points of $\Psi$ in $\mathcal{L}$. Note that \begin{align*}
\dot{\Psi}(R,R_d)&=-\frac{1}{2}\hbox{Tr}(R_d^\top\dot{R})=-\frac{1}{2}\hbox{Tr}(R_d^\top RS(\bm{\omega}))=\\&=\frac{1}{2}\bm{\omega}^\top S^{-1}(R_d^\top R-R^\top R_d)=\bm{\omega}^\top\bm{\chi}.
\end{align*} So, critical points of $\Psi$ on $\mathcal{L}$ are given by $R_d^\top R-R^\top R_d=0$, that is, $R=R_d$.

Denote by $\bm{\Omega}$ the signal error representing the difference between the desired and actual angular velocities, that is, $\bm{\Omega}=\bm{\omega}-\bm{\omega}_d\in\mathbb{R}^3$ and consider the Lyapunov function \begin{equation}\label{Lfunction2}
V_{2,n}(\bm{\chi},\bm{\Omega})=\Psi(R,R_d)+\frac{1}{2}\bm{\Omega}^\top J\bm{\Omega}.
\end{equation} Note that $V_{2,n}$ is positive definite and there exists positive constants $K_1,K_2\in\R_{\geq_0}$ such that $K_1||\bm{\eta}||^2\leq V_{2,n}(\bm{\chi},\bm{\Omega})\leq K_{2}||\bm{\eta}||^2$, where $\bm{\eta}=[\bm{\chi}^{T}, \bm{\Omega}^{T}]^{T}$, $K_1=\hbox{max}\{c_2,\frac{1}{2}\lambda_{\hbox{max}}(J)\}$, $K_{2}=\hbox{min}\{c_1,\frac{1}{2}\lambda_{\hbox{min}}(J)\}$. 

By differentiating $V_{2,n}$ with respect to the time along the trajectories of the system we have

\begin{align}\label{2vdot}\dot{V}_{2,n}(\bm{\chi},\bm{\Omega})=&\dot{\Psi}(R,R_d)+\bm{\Omega}^\top J\dot{\bm{\Omega}}=\bm{\omega}^{\top}\bm{\chi}+\bm{\Omega}^\top J(\dot{\bm{\omega}}-\dot{\bm{\omega}}_d)\nonumber
\\=&\bm{\omega}^{\top}_d\bm{\chi}+\bm{\Omega}^\top\bm{\chi}+\bm{\Omega}^\top(J\dot{\bm{\omega}}-J\dot{\bm{\omega}}_d)\nonumber\\
=&\bm{\omega}^{\top}_d\bm{\chi}+\bm{\Omega}^\top (\bm{\chi}-J\bm{\dot{\omega}}_d-\bm{\omega}\times J\bm{\omega}\\ & \hspace{2cm}+{\bm{f}}_{\omega}+\bm{u}_1).\nonumber\end{align} We design the desired velocity as $\bm{\omega}_d=-k_3\bm{\chi}$, and the attitude controller as \begin{align}\label{attitudecontrol}
    \bm{u}_1=&-k_4\bm{\Omega}-k_5\bm{\chi}+\bm{\omega}\times J\bm{\omega}\\&-k_3J\dot{\bm{\chi}}-\mu(\bm{f}_{w,n}\mid \bm{q},\mathcal{D}_n),\nonumber
\end{align}where $k_3,k_4,k_5\in\mathbb{R}_{>0}$ are controller gains to be tuned. 

\begin{prop}\label{prop2}
Consider the system ~\eqref{for:system} and a GP model trained with ~\eqref{for:dataset} satisfying ~\cref{ass:2}, \ref{ass:1}, \ref{as:rkhs}. The orientation control law \eqref{attitudecontrol} guarantees that the tracking error $\bm{\eta}$ is uniformly ultimately bounded in probability by 
\begin{align}
\Prob\{\Verts{\bm{\eta}(t)}\leq\sqrt{\frac{K_{2}}{K_{1}}} \max_{\bm{q}\in Q_\X}\bar{\rho}_{n_\text{end}}(\bm{q}),\forall t\geq T\}\geq\delta
\end{align}
with time constant $T\in\R_{\geq 0}$ on $Q_\X$.
\end{prop}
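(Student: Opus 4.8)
The plan is to mirror the argument used for the position controller, now with the attitude Lyapunov function $V_{2,n}$ in \eqref{Lfunction2} restricted to the region $\mathcal{L}$ of \eqref{invariantset} where $\Psi$ is positive definite and $c_1\Verts{\bm{\chi}}^2\leq\Psi\leq c_2\Verts{\bm{\chi}}^2$. First I would substitute the desired angular velocity $\bm{\omega}_d=-k_3\bm{\chi}$ together with the control law \eqref{attitudecontrol} into \eqref{2vdot}. Using $\dot{\bm{\omega}}_d=-k_3\dot{\bm{\chi}}$, the terms $\bm{\omega}\times J\bm{\omega}$, $-J\dot{\bm{\omega}}_d$ and $-k_3J\dot{\bm{\chi}}$ cancel by construction, the term $\bm{\omega}_d^\top\bm{\chi}$ contributes $-k_3\Verts{\bm{\chi}}^2$, and what remains is
\begin{align}
\dot V_{2,n}(\bm{\chi},\bm{\Omega})={}&-k_3\Verts{\bm{\chi}}^2+(1-k_5)\bm{\chi}^\top\bm{\Omega}-k_4\Verts{\bm{\Omega}}^2\notag\\
&+\bm{\Omega}^\top\big(\f_\omega(\bm{q})-\mu(\f_{\omega,n}\mid\bm{q},\mathcal{D}_n)\big).\notag
\end{align}
Choosing the gains so that the quadratic form in $\bm{\eta}=[\bm{\chi}^\top,\bm{\Omega}^\top]^\top$ is negative definite (for instance $k_5=1$, or more generally $k_3k_4>(1-k_5)^2/4$), there is a constant $\kappa>0$ with $-k_3\Verts{\bm{\chi}}^2+(1-k_5)\bm{\chi}^\top\bm{\Omega}-k_4\Verts{\bm{\Omega}}^2\leq-\kappa\Verts{\bm{\eta}}^2$.

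Next I would invoke \cref{lemma:boundederror} together with \cref{ass:1}: since the $\bm{\omega}$-block of the joint error vector has norm no larger than the full vector, with probability at least $\delta$ one has $\Verts{\f_\omega(\bm{q})-\mu(\f_{\omega,n}\mid\bm{q},\mathcal{D}_n)}\leq\bar{\rho}_n(\bm{q})$, hence $\bm{\Omega}^\top\big(\f_\omega(\bm{q})-\mu(\cdot)\big)\leq\Verts{\bm{\Omega}}\,\bar{\rho}_n(\bm{q})\leq\Verts{\bm{\eta}}\,\bar{\rho}_n(\bm{q})$. Therefore
\begin{align}
\Prob\Big\{\dot V_{2,n}\leq-\kappa\Verts{\bm{\eta}}^2+\Verts{\bm{\eta}}\,\bar{\rho}_n(\bm{q})\Big\}\geq\delta,\notag
\end{align}
so $\dot V_{2,n}<0$ with probability $\delta$ whenever $\Verts{\bm{\eta}}>\bar{\rho}_n(\bm{q})/\kappa$, in particular whenever $\Verts{\bm{\eta}}>\max_{\bm{q}\in Q_\X}\bar{\rho}_n(\bm{q})/\kappa$, the maximum existing by \cref{ass:1}. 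Combining this with the quadratic sandwich $K_1\Verts{\bm{\eta}}^2\leq V_{2,n}(\bm{\eta})\leq K_2\Verts{\bm{\eta}}^2$ and the standard ultimate-boundedness estimate (as in the proof of the position proposition, with $\alpha_1(r)=K_1r^2$, $\alpha_2(r)=K_2r^2$) yields, for each fixed data set $\mathcal{D}_n$, a bound $b_n=\sqrt{K_2/K_1}\,\max_{\bm{q}\in Q_\X}\bar{\rho}_n(\bm{q})$ with $\Prob\{\Verts{\bm{\eta}}\leq b_n\}\geq\delta$.

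Finally I would close the argument with the switching part exactly as for the position controller: by \cref{ass:2} there are only finitely many switches of $n(t)$, so there is $T_{\text{end}}$ with $n(t)=n_{\text{end}}$ for all $t\geq T_{\text{end}}$; since each $V_{2,n}$ is a Lyapunov function that decreases outside the corresponding ball, switching among these stable systems cannot produce an unbounded trajectory~\cite{liberzon1999basic}, and once $\mathcal{D}_{n_{\text{end}}}$ is active the bound $b_{n_{\text{end}}}=\sqrt{K_2/K_1}\max_{\bm{q}\in Q_\X}\bar{\rho}_{n_{\text{end}}}(\bm{q})$ holds for all $t\geq T$ with a suitable $T\geq T_{\text{end}}$, which is the claim.

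I expect the main obstacle to be twofold: handling the cross term $(1-k_5)\bm{\chi}^\top\bm{\Omega}$ (which, unlike the position case where the $-\bm{e}$ term of $\bm{u}_2$ cancels $\bm{z}^\top\bm{e}$ exactly, forces a gain condition and a completion-of-squares step), and making the switched/multiple-Lyapunov-function argument rigorous, in particular guaranteeing that $\bm{\eta}$ remains in the region $\mathcal{L}$ where $\Psi$ is positive definite and the bounds $c_1\Verts{\bm{\chi}}^2\leq\Psi\leq c_2\Verts{\bm{\chi}}^2$ are valid, so that the ultimate bound is genuinely attained there.
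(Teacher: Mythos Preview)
Your proposal is correct and follows essentially the same route as the paper: substitute $\bm{\omega}_d=-k_3\bm{\chi}$ and \eqref{attitudecontrol} into \eqref{2vdot}, bound the cross term $(1-k_5)\bm{\chi}^\top\bm{\Omega}$ to obtain a negative-definite quadratic form in $\bm{\eta}$, apply \cref{lemma:boundederror} to bound the GP residual by $\Verts{\bm{\Omega}}\bar{\rho}_n(\bm{q})$, and then conclude via the quadratic sandwich on $V_{2,n}$ and \cref{ass:2} exactly as in the position case. The only cosmetic difference is that the paper handles the cross term via the Peter--Paul (Young) inequality, producing an explicit constant $k_6=\min\{(k_5-1)\rho/2,(k_5-1)/(2\rho)\}$ with $k_5>1$, whereas you phrase it as a negative-definiteness condition $k_3k_4>(1-k_5)^2/4$ yielding a generic $\kappa$; your concern about remaining in $\mathcal{L}$ is addressed separately in the paper's subsequent corollary rather than inside the proof.
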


\begin{proof}
Similarly as with the position controller, by substituting $\bm{\omega}_d=-k_3\bm{\chi}$, $\bm{u}_1$ given by  \eqref{attitudecontrol} into \eqref{2vdot} and by employing Pauli inequality and ~\cref{lemma:boundederror}, the evolution of the Lyapunov function $V_{2,n}$ can be upper bounded by \begin{align}
\Prob\{\dot{V}_{2,n}&\leq -\min\{k_3,k_4, k_6\}||\bm{\eta}||^2
+||\bm{\Omega}||\bar{\rho}_n(\bm{q})\}\geq\delta\notag,\label{for:UPV3}
\end{align} where $k_6=\hbox{min}\{\frac{(k_5-1)\rho}{2},\frac{(k_5-1)}{2\rho}\}$ for some $\rho>0$ such that $\frac{(k_5-1)\rho}{2},\frac{(k_5-1)}{2\rho}$ are positive constants.

%we get $$\dot{V}_{2,n}(\bm{\chi},\bm{\Omega})\leq -\min\{k_3,k_4\}||\bm{\eta}||^2-\bm{\Omega}^\top(\bm{\rho}_{\omega,n}-\mu(\bm{\rho}_{\omega,n}\mid \bm{s},\mathcal{D}_n)).$$ So, by ~\cref{lemma:boundederror}, the evolution of the Lyapunov function $V_{2,n}$ can be upper bounded by 
%\begin{align}
%\Prob\{\dot{V}_{2,n}&\leq -\min\{k_3,k_4\}||\bm{\eta}||^2
%+||\bm{\Omega}||\bar{\rho}_n(\bm{s})\}\geq\delta\notag.\label{for:UPV3}
%\end{align}
 Thus, the evolution is negative with probability $\delta$ for all $\bm{\eta}$ such that \begin{align}
\Verts{\bm{\eta}}>\max_{\bm{q}\in Q_\X}\bar{\rho}_n(\bm{q})\frac{1}{\min\{k_ 3,k_4,k_6\}},\notag%:=\bm{\mu}_n
\end{align}
where a maximum of $\bar{\rho}_n$ exists regarding to~\cref{ass:1}. Finally, the Lyapunov function $V_{2,n}$ is lower and upper bounded by $\overline{\alpha}_1(\Verts{\bm{\eta}})\leq V_{2,n}(\bm{\eta})\leq\overline{\alpha}_2(\Verts{\bm{\eta}})$, where $\overline{\alpha}_1(r)=K_{1}r^2$ and $\alpha_2(r)=K_{2}r^2$. Thus, we can compute the maximum tracking error $c_n\in\R_{\geq_0}$ such that $\Prob\{||\bm{\eta}||\leq c_n\}\geq\delta$ by 
$
c_n=\sqrt{\frac{K_{2}}{K_{1}}}\max_{\bm{q}\in Q_\X}\bar{\rho}_n(\bm{q})$.%\label{for:bound}

Since~\cref{ass:2} only allows a finite number of switches, there exists a time $T_{\text{end}}\in\R_{\geq 0}$ such that $n(t)=n_{\text{end}}$ $\forall t\geq T_{\text{end}}$ and so $\mathcal{D}_{n(T_{\text{end}})}=\mathcal{D}_{n_{\text{end}}}$. Thus, there exists $T\geq T_{\text{end}}$ such that  $\Prob\{\Verts{\bm{\eta}(t)}\leq\sqrt{\frac{K_{2}}{K_{1}}} \max_{\bm{q}\in Q_\X}\bar{\rho}_{n_\text{end}}(\bm{q}),\forall t\geq T\}\geq\delta$.\end{proof}

\begin{cor} Under the conditions of Proposition $2$, if in addition we assume that $\Psi(R(0),R_d)<2$, $\displaystyle{||\bm{\Omega}(0)||^{2}<\frac{1}{\lambda_{\hbox{min}}(J)}(2-\Psi(R(0),R_d))}$, then $\mathcal{L}$ is positively invariant.
\end{cor}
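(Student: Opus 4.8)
The plan is to run the argument through the attitude Lyapunov function $V_{2,n}$ of \eqref{Lfunction2}. Since $J\succ 0$ we have $\tfrac12\bm{\Omega}^{\top}J\bm{\Omega}\ge 0$, hence $\Psi(R,R_d)\le V_{2,n}(\bm{\chi},\bm{\Omega})$ along every closed-loop trajectory; therefore $(R(t),R_d)\in\mathcal{L}$ for all $t\ge 0$ as soon as $V_{2,n}(\bm{\chi}(t),\bm{\Omega}(t))<2$ for all $t\ge 0$, and the task reduces to showing that a suitable sublevel set of $V_{2,n}$ contained in $\{\Psi<2\}$ is forward invariant. I would first check this at $t=0$: the hypothesis on $\|\bm{\Omega}(0)\|$ together with positive definiteness of $J$ gives $\tfrac12\bm{\Omega}(0)^{\top}J\bm{\Omega}(0)<2-\Psi(R(0),R_d)$, so that $V_{2,n}(\bm{\chi}(0),\bm{\Omega}(0))=\Psi(R(0),R_d)+\tfrac12\bm{\Omega}(0)^{\top}J\bm{\Omega}(0)<2$, i.e. the initial state lies in the open set $\{V_{2,n}<2\}$.

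Next I would propagate this forward with the dissipation estimate already obtained in the proof of \cref{prop2}. As long as the trajectory stays in $\mathcal{L}\cap Q_\X$ (so that the comparison bounds $c_1\|\bm{\chi}\|^2\le\Psi\le c_2\|\bm{\chi}\|^2$ on $\mathcal{L}$ and \cref{lemma:boundederror} both apply), that proof yields, with probability $\delta$, $\dot V_{2,n}\le 0$ whenever $\|\bm{\eta}\|>r^{\star}$, where $r^{\star}:=\min\{k_3,k_4,k_6\}^{-1}\max_{\bm{q}\in Q_\X}\bar{\rho}_{n}(\bm{q})$. Combined with the quadratic upper bound $V_{2,n}\le\overline{\alpha}_2(\|\bm{\eta}\|)$ from that same proof, this shows $V_{2,n}$ can never increase above the level $\overline{\alpha}_2(r^{\star})$, hence $V_{2,n}(t)\le\ell^{\star}:=\max\{V_{2,n}(\bm{\chi}(0),\bm{\Omega}(0)),\overline{\alpha}_2(r^{\star})\}$ for all such $t$. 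Choosing the gains so that the ultimate bound of \cref{prop2} lies inside $\mathcal{L}$, i.e. $\overline{\alpha}_2(r^{\star})<2$, and combining with the $t=0$ estimate gives $\ell^{\star}<2$. A first-exit-time argument then closes the loop: if $t_1$ were the first time the trajectory left $\mathcal{L}$, the bound above would hold on $[0,t_1)$ and force $\Psi(R(t_1),R_d)\le V_{2,n}(t_1)\le\ell^{\star}<2$ by continuity, so the trajectory is still in the interior of $\mathcal{L}$ at $t_1$, a contradiction. Thus $\mathcal{L}$ is positively invariant with the confidence $\delta$ of \cref{prop2}, and along the trajectory $\|\bm{\Omega}(t)\|^{2}\le 2\lambda_{\min}(J)^{-1}\ell^{\star}$.

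The main obstacle is that, unlike in the disturbance-free geometric attitude control setting, $\dot V_{2,n}$ is negative only outside the residual ball $\{\|\bm{\eta}\|\le r^{\star}\}$, so the forward-invariant set is the slightly smaller level set $\{V_{2,n}\le\ell^{\star}\}$ rather than $\{V_{2,n}<2\}$ itself; this makes essential both the design condition that the ultimate bound of \cref{prop2} be strictly contained in $\mathcal{L}$ and the first-exit-time bookkeeping, which is needed because the dissipation inequality is only available while the trajectory remains in $\mathcal{L}\cap Q_\X$ — precisely the set whose invariance is being established, so a direct argument would be circular. The remaining steps are the routine quadratic comparison estimates already used in \cref{prop2}.
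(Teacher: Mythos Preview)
Your argument is sound and shares the same skeleton as the paper's proof --- bound $\Psi$ above by $V_{2,n}$, check $V_{2,n}(0)<2$ from the initial-condition hypothesis, and then show $V_{2,n}$ cannot grow past $2$ --- but the two diverge in how the GP error term is handled. The paper takes a more direct route: it rewrites the dissipation estimate in the multiplicative form
\[
\dot V_{2,n}\le \frac{1}{c_2}\Bigl(\bm{\Omega}^{\top}\bigl(\bm{f}_{\omega}-\mu(\bm{f}_{\omega,n}\mid\bm{q},\mathcal D_n)\bigr)-\min\{k_3,k_4\}\Bigr)V_{2,n},
\]
and then simply asserts that $k_3,k_4,c_2$ can be chosen so that the bracketed factor is nonpositive, yielding the global monotonicity $V_{2,n}(t)\le V_{2,n}(0)$ and hence $\Psi(R(t),R_d)\le V_{2,n}(0)<2$ directly. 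You instead work only with the weaker ultimate-boundedness estimate from \cref{prop2} (negativity of $\dot V_{2,n}$ only outside the residual ball $\{\|\bm\eta\|\le r^{\star}\}$), which forces you to introduce the extra design condition $\overline{\alpha}_2(r^{\star})<2$ and the first-exit-time argument to resolve the circularity you correctly flagged. The paper's version is shorter but hides a high-gain hypothesis in the phrase ``we can choose $k_3,k_4,c_2$''; your version is more explicit about where the dissipation inequality is actually valid and what it costs, at the price of one additional (and natural) assumption that the ultimate bound of \cref{prop2} already sits inside $\mathcal L$.
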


\begin{proof}
Note that $$\dot{V}_{2,n}(\bm{\chi},\bm{\Omega})\leq -\min\{k_3,k_4\}||\bm{\eta}||^2-\bm{\Omega}^\top(\bm{\f}_{\omega}-\mu(\bm{f}_{\omega,n}\mid \bm{q},\mathcal{D}_n)).$$  Moreover, 
\begin{align*}
    \dot{V}_{2,n}(\bm{\chi},\bm{\Omega})&\leq -\frac{\min\{k_3,k_4\}}{c_2}V_{2,n}(\bm{\chi},\bm{\Omega})\\&+\frac{1}{c_2}\bm{\Omega}^\top(\bm{f}_{\omega}-\mu(\bm{f}_{\omega,n}\mid \bm{q},\mathcal{D}_n))V_{2,n}(\bm{\chi},\bm{\Omega})\\
    &=\frac{1}{c_2} (\bm{\Omega}^\top(\bm{f}_{\omega}-\mu(\bm{f}_{\omega,n}\mid \bm{q},\mathcal{D}_n))\\
    &-\min\{k_3,k_4\})V_{2,n}
\end{align*}
which implies that we can choose $k_3,k_4,c_2$ such that for every $t\geq 0$ it holds $$V_{2,n}(\bm{\chi}(t),\bm{\Omega}(t))\leq V_{2,n}(\bm{\chi}(0),\bm{\Omega}(0)),$$ or equivalently
\begin{align}
    \Psi(R(t),R_d)\leq&\Psi(R(0),R_d)+\frac{1}{2}\bm{\Omega}(0)^\top J\bm{\Omega}(0)\\
    \leq &\Psi(R(0),R_d)+\frac{1}{2}\lambda_{\hbox{min}}(J)||\bm{\Omega}(0)||^2\\
    \leq&\Psi(R(0),R_d)+2-\Psi(R(0),R_d)<2.
\end{align}
Thus, $\mathcal{L}$ is positively invariant and then $R(t)\in\mathcal{L}$, $\forall t$.
\end{proof}
%\begin{rem}
%The position and orientation control laws do not depend on any state derivatives, which are typically noisy in measurements. The derivatives, i.e. the translational and angular accelerations, are only necessary for the training of the GP, see~\cref{for:dataset}, which can often deal with noisy data.%For instance, GP models can handle additive Gaussian noise on the output~\cite{rasmussen2006gaussian}.
%\end{rem}
\begin{rem}
Note the orientation controller $ \bm{u}_1$ provides \textit{almost} global exponential convergence for orientation stabilization since the initial conditions should satisfy $\Psi(R(0),R_d)<2$, $\displaystyle{||\bm{\Omega}(0)||^{2}<\frac{1}{\lambda_{\hbox{min}}(J)}(2-\Psi(R(0),R_d))}$.
\end{rem}

\section{Experimental results}\label{sec:exp}

In this section the control algorithm proposed above is applied to a fault tolerant hexarotor vehicle. The main objective is to show that the GP  estimates allow to improve the performance of the control after a failure. 

The structure of the UAV in Fig. \ref{fig:vuelo_outdoor} is based on a DJI F550 Flame Wheel ARF Kit. The rotors are DJI 2212/920KV Brushless DC motors capable of providing \SI{1}{\kilogram} thrust each, while the total weight of the vehicle sits around \SI{2.8}{\kilogram}. This allows to maintain the vehicle in the air even when one of the rotors is not providing any thrust. 
The UAV is controlled by means of a custom made flight computer, with a Cortex M3 microcontroller. It includes a variety of sensors, such as an MPU600 Inertial Measurement Unit and a HMC5883L digital compass to estimate the attitude of the vehicle. 

The hexarotor design was proposed in \cite{TMECH2020}. It is equipped with a mechanism that allows to instantaneously tilt one rotor when a failure occurs, increasing in this was the sate of achievable torque after a rotor failure, when comparing with other fault-tolerant hexarotor designs \cite{giribet2016analysis,Michieletto2017}. Figure \ref{fig:control_alloc} shows a diagram block of how the fault compensation system works. The control algorithm computes the control signal $u_{cmd}$ (torque and forces), and the control allocation computes the PWM signal commanded to each rotor to achieve the commanded torque and forces. When a rotor fails, the fault detection system is activated and a signal $M_i$ is commanded to the hexarotor indicating which rotor failed, then a device is activated tilting one of the remaining rotors. Signal $M_i$ is also received by the control allocation system, which takes into account the reconfiguration of the remaining functioning rotors. As it can be seen, the control algorithm doesn't need to be changed. Next, the fault detection and control allocation subsystems are considered as part of the control algorithm. 

\begin{figure}[t]
    \centering
    \includegraphics[width = 0.9\columnwidth]{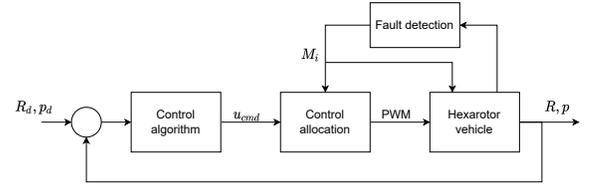}
    \caption{Architecture of the fault detection and control allocation subsystems.}
    \label{fig:control_alloc}
\end{figure}

Figure \ref{fig:vuelo_outdoor} shows the behavior of the hexarotor during a failure. It can be appreciated how the vehicle recovers after detecting the failure. Also, in frame (d) it can be appreciated how one rotor was reconfigured after the failure. 

Although the control algorithm is the same, experimentally can be noted that the performance of the control system is degraded respect to the nominal state. As it was mentioned before, there are several reasons why the performance is not the same. The reconfiguration of rotors has impact on the vehicle dynamic \cite{UnmSyst2020}. In Fig. \ref{fig:path-tracking} it can be appreciated how the the performance of the hexarotor is affected after the failure. This is an indoor experiment, in red is given the commanded trajectory and in blue the true position measured with a Marvelmind, at $50ms$ with precision $\pm 2 cm$. It can be noted when a fail in motor 3 ($M_3$) is activated, and then how the tracking performance is degraded. A video for this experiment can be seen at \cite{VideoPathFollow}. 

\begin{figure}[t]
    \centering
    \includegraphics[width = 0.9\columnwidth]{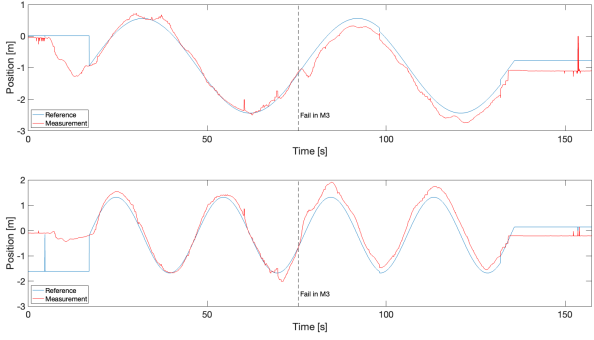}
    \caption{Trajectory tracking with a fault-tolerant hexarotor.}
    \label{fig:path-tracking}
\end{figure}

In Fig. \ref{fig:GPest} it can be shown that, after the failure, the value $\hat{\f}_{v,n}(\bm{q})$ detects a change in the disturbances, which is consistent with a degradation in the control system performance. The source of this perturbation is not clear, model uncertainty, for instance an error in the tilting angle, aerodynamics perturbations, among others could be affecting the vehicle. But, it is not relevant here to determine the source or sources, but estimate the resultant effect.
In this example, the perturbation estimations were not used as a feedback in the control loop.

\begin{figure}[t]
    \centering
    \includegraphics[width = 0.9\columnwidth]{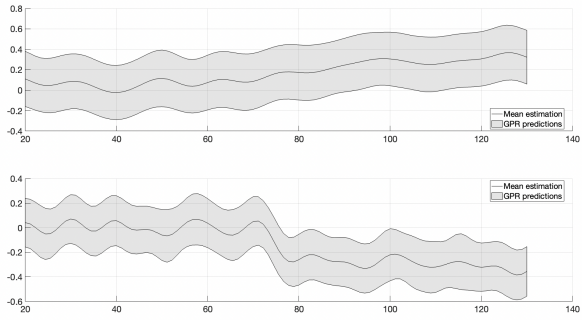}
    \caption{GP estimation of the horizontal components of the perturbations $\hat{\f}_{v,n}$. The solid line is the mean and the shadow represents the 95\% prediction interval.}
    \label{fig:GPest}
\end{figure}

% \begin{figure}[t]
%     \centering
%     \includegraphics[width = 0.9\columnwidth]{img/hexa-tilt.jpg}
%     \caption{Fault-tolerant hexarotor. Outdoor experiments.}
% \end{figure}

To test how the control algorithm performs when the GP estimates are used to compensate the disturbances a second experiment was carried out. 
Figure \ref{fig:controllers_diagram} shows a block diagram of the control algorithm architecture proposed here. Two nested controllers are used for position and attitude control. The attitude control is executed at a frequency of \SI{200}{\hertz} and the position control at \SI{20}{\hertz}. 

\begin{figure}[t]
    \centering
    \includegraphics[width = \columnwidth]{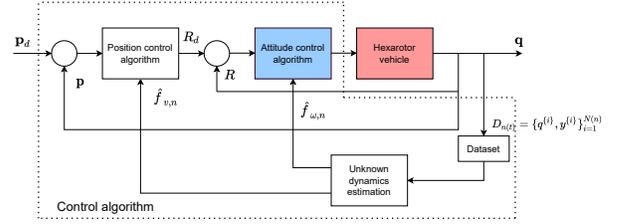}
    \caption{Control scheme diagram. The fault detection and control allocation subsystems are not included here.}
    \label{fig:controllers_diagram}
\end{figure}

Two different flights were performed in similar conditions, with a trajectory as in Fig. \ref{fig:path-tracking}, and a rotor failure is injected during the flight. A video of these tests can be found at \cite{VideoGPbased}.
In the first flight  the dataset was stored but the control algorithm was not compensated with the GP estimates. Figure \ref{fig:first-flight} shows the attitude response (pitch and roll) of the vehicle during the flight. It can be noted that, at approximately $35sec$ a failure is introduced and the vehicle recovers stability after a rotor reconfiguration., this can be particularly noticed in Fig. \ref{fig:first-flight-pwm}, where the PWM signals commanded to each rotor are shown. The second flight (see Fig. \ref{fig:second-flight}) incorporated the corrections provided by the GP estimates in the control loop. It can be noted that after the failure (approximately at $33sec$) the performance of the vehicle improves with respect to the first flight, $17\%$ in terms of the mean square error respect to the first flight. Also it can be noticed comparing the PWM signals given in Fig. \ref{fig:first-flight-pwm} and \ref{fig:second-flight-pwm}, that the signals without the compensation are more noisy.

\begin{figure}[t]
    \centering
    \includegraphics[width = \columnwidth]{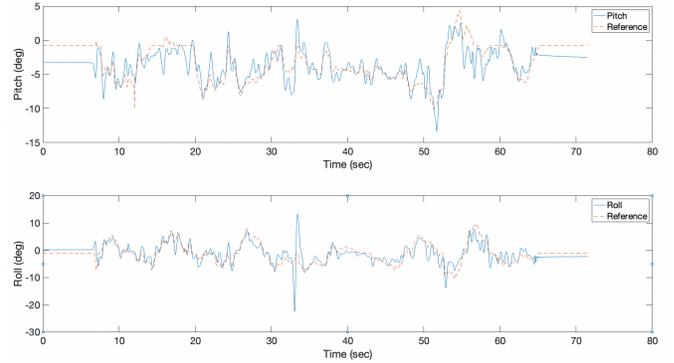}
    \caption{Orientation of the hexarotor, for a failure occurring during the flight.}
    \label{fig:first-flight}
\end{figure}

\begin{figure}[t]
    \centering
    \includegraphics[width = 0.9\columnwidth]{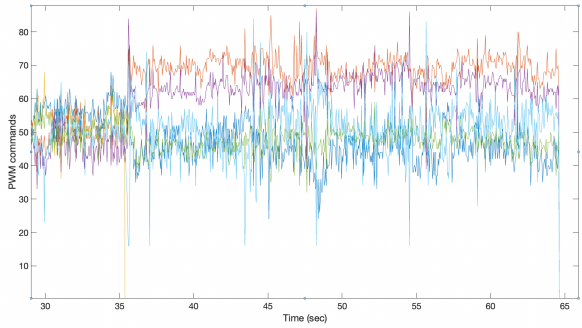}
    \caption{PWM signals of the vehicle, for a failure occurring during the flight.}
    \label{fig:first-flight-pwm}
\end{figure}

\begin{figure}[t]
    \centering
    \includegraphics[width = \columnwidth]{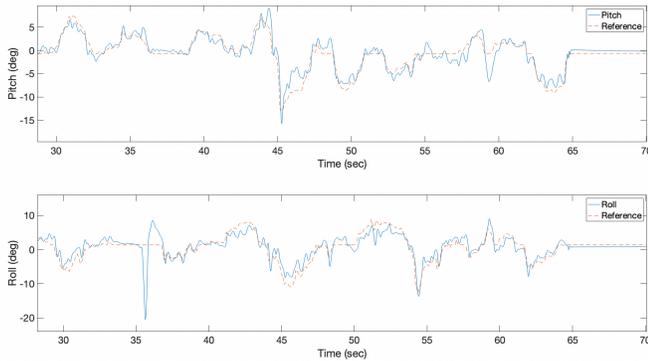}
    \caption{Orientation of the hexarotor, for a failure occurring during the flight. Vehicle with GPS estimates compensation.}
    \label{fig:second-flight}
\end{figure}

\begin{figure}[t]
    \centering
    \includegraphics[width = 0.9\columnwidth]{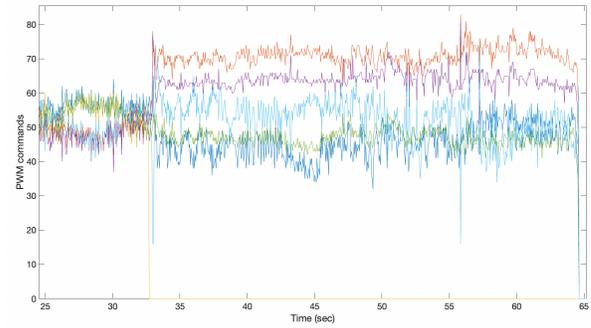}
    \caption{PWM signals of the vehicle, for a failure occurring during the flight. Vehicle with GPS estimates compensation.}
    \label{fig:second-flight-pwm}
\end{figure}

\section*{Conclusion}
We present a learning-based fault-tolerant control law for an hexarotor UAV under model uncertainties using GP to predict the unknown dynamics. In particular, the use of a learning strategy based on GP is studied to estimate certain uncertainties that appear in a hexacopter vehicle with the ability to reconfigure its rotors to compensate for failures. The rotors reconfiguration introduces disturbances that make the dynamic model of the vehicle differ from the nominal model. Several factors can introduce these disturbances, e.g. errors in the angles at which the actuators are tilted, characterizations of the motors in different work regimes, and even aerodynamic disturbances caused by a repositioning of the rotors. From the point of view of compensating for these disturbances, it is not necessary to discern where the disturbances come from. The proposed control law learn the uncertainties in the model after a failure is detected guaranteeing the probabilistic boundedness of the tracking error to the reconfigured attitude and positions with high probability. The system obtains information in a data set which uses to estimate the uncertainties with a GP model, this allows to improve the model and, thus, mitigate the model uncertainties. Based on experimental flights, it was observed that the algorithm allows an improvement of the performance of the system, for which two flights were carried out under similar conditions, compensating in one case the control loop with the estimates provided by the GP.

\section*{Acknowledgment}
L. Colombo is very grateful to T. Beckers from University of Pennsylvania for many useful comments and stimulating discussions on learning control with Gaussian processes.
\bibliography{mybib}
\bibliographystyle{ieeetr}

\end{document}